\definecolor{dkgreen}{rgb}{0,0.6,0}
\definecolor{gray}{rgb}{0.5,0.5,0.5}
\definecolor{mauve}{rgb}{0.58,0,0.82}
\tiny\color{gray},
\title{A note on the Minimum Norm Point algorithm}
\author{
  Igor Stassiy \\
  University of Saarland \\
  Saarbruecken, Germany
}
\begin{document}
\maketitle
\newtheorem{myob}{Observation}
\newtheorem{myde}{Definition}
\newtheorem{mythe}{Theorem}
\newtheorem{myle}{Lemma}
\newtheorem{myco}{Corollary}

\begin{abstract}
We present a provably more efficient implementation of the Minimum Norm Point
Algorithm conceived by Fujishige than the one presented in \cite{FUJI06}. The
algorithm solves the minimization problem for a class of functions known as
submodular. Many important functions, such as minimum cut in the graph, have the
so called submodular property \cite{FUJI82}. It is known that the problem can
also be efficiently solved in strongly polynomial time \cite{IWAT01}, however
known theoretical bounds are far from being practical. We present an improved
implementation of the algorithm, for which unfortunately no worst case bounds
are know, but which performs very well in practice. With the modifications presented, the algorithm performs an order
of magnitude faster for certain submodular functions.
\end{abstract}

\vspace{3em}

\subsection*{Introduction}

Given a base set $S$, a submodular function $F$ is such that, for any $A, B
\subseteq S$ the following holds

\begin{equation}
F(A)+F(B) \geq F(A \cap B)+F(A \cup B)
\end{equation}

It is not hard to show, that a cut in the graph is a submodular function, where
$F(A) = cut(A, V \backslash A)$. The objective is to minimize the cut and which
in turn enables us to find a maximum flow in a graph. It is also known that any
symmetric submodular function, that is for $F(A)=F(S\backslash A) $ for all $A
\subseteq S$, can be seen as a cut function in a certain graph \cite{QUER95}.

\subsection*{Base Polyhedra and Submodular Function Minimization}

Throughout this paper we assume that for a set $E \in 2^{\{1\ldots n\}}$ and a
point $x \in \mathbb{R}^n$ $x(E) = \sum_{e_i \in E} x_{e_i}$ or a sum
of projection on coordinates in $E$. It will also be useful to define a base
polyhedron $B(F)$ with respect to a submodular function $F$:

\begin{myde}
Let $E$ be a finite nonempty set and $F$ be a submodular function $F : 2^E
\mapsto \mathbb{R}$. Suppose that $F(\emptyset) = 0$, then we can define the
base polyhedron:

\begin{eqnarray*}
P(F) & = & \{x | x \in \mathbb{R}^E, \forall X \in 2^E : x(X) \leq F(X)\} \\
B(F) & = & \{x | x \in P(F), x(E) = F(E)\}
\end{eqnarray*}
\end{myde}

\subsection*{Minimum Norm Point Algorithm}
\newcommand{\argmin}{\operatornamewithlimits{argmin}}
Suppose we are given a finite set $P$ of points $p_i \in \mathbb{R}^n$. The
problem is to find the minimum norm point $x^{*}$ in the convex hull of points $p_i$
i.e. $\argmin \| x\|_2, x \in CH(p_1, p2_, \ldots, p_n)$.
The following theorem establishes the relationship between minimum norm point in
the convex hull and the minimization of a submodular function:

\begin{mythe}
Let $x^{*}$ be the minimum norm point in the base polyhedron $B(F)$ as defined
above. Define 
\begin{eqnarray*}
A_{+} & = & \{e | e \in E, x^{*}(e) \leq 0\} \\
A_{-} & = & \{e | e \in E, x^{*}(e) < 0\}
\end{eqnarray*}

Then $A_{+}$ is the unique maximal minimizer of $F$ and $A_{-}$ is the unique
minimal minimizer of $F$.
\end{mythe}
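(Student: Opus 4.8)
The plan is to combine the first-order optimality of $x^*$ with the greedy description of linear optimization over a base polyhedron, and then read off the sign pattern of $x^*$. First, since $B(F)$ is convex and $x^*$ minimizes $\|\cdot\|_2$ over it, for any $y\in B(F)$ the segment point $(1-\lambda)x^*+\lambda y$ stays in $B(F)$ and expanding $\|(1-\lambda)x^*+\lambda y\|_2^2$ to first order in $\lambda>0$ shows $\langle x^*,\,y-x^*\rangle\ge 0$; hence $x^*$ also minimizes the \emph{linear} functional $y\mapsto\langle x^*,y\rangle$ over $B(F)$.

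The technical core is to show that every level set of $x^*$ is tight. Let $\mu_1<\dots<\mu_k$ be the distinct values taken by $x^*$ on $E$, and put $L_0=\emptyset$, $L_j=\{e\in E:\ x^*(e)\le\mu_j\}$, so $L_k=E$. I would invoke the classical fact (Edmonds' greedy algorithm) that, for any ordering $e_1,\dots,e_n$ of $E$ with $x^*(e_1)\le\dots\le x^*(e_n)$, the vector $y^{g}$ given by $y^{g}(e_i)=F(E_i)-F(E_{i-1})$ with $E_i=\{e_1,\dots,e_i\}$ lies in $B(F)$ and minimizes $\langle x^*,\cdot\rangle$ over $B(F)$. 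By the first step, $\langle x^*,y^{g}\rangle=\langle x^*,x^*\rangle$. Writing $\langle x^*,y\rangle=\sum_{j=1}^{k}\mu_j\bigl(y(L_j)-y(L_{j-1})\bigr)$ for $y=y^{g}$ and $y=x^*$, using $y^{g}(L_j)=F(L_j)$ (each $L_j$ is a prefix of the ordering, so the sum telescopes), and applying summation by parts together with $x^*(E)=y^{g}(E)=F(E)$, the equality of objective values collapses to $\sum_{j=1}^{k-1}(\mu_{j+1}-\mu_j)\bigl(F(L_j)-x^*(L_j)\bigr)=0$. Since each factor $\mu_{j+1}-\mu_j$ is strictly positive and each $F(L_j)-x^*(L_j)\ge 0$ (because $x^*\in P(F)$), every summand vanishes, so $F(L_j)=x^*(L_j)$ for all $j$.

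Next I would establish the lower bound: for any $A\subseteq E$, discarding the nonnegative coordinates of $x^*$ on $A$ and then adjoining all remaining negative coordinates gives $F(A)\ge x^*(A)\ge\sum_{e\in A,\ x^*(e)<0}x^*(e)\ge\sum_{e\in E,\ x^*(e)<0}x^*(e)=x^*(A_-)$, so $\min_{A}F(A)\ge x^*(A_-)$. On the other hand $A_+$ and $A_-$ are both level sets of $x^*$ (each equals some $L_j$) and differ only in coordinates where $x^*$ vanishes, so the tightness result yields $F(A_+)=x^*(A_+)=x^*(A_-)=F(A_-)$. Hence both equal $\beta:=x^*(A_-)=\min_{A}F(A)$, i.e.\ $A_+$ and $A_-$ are minimizers of $F$.

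Finally, for extremality and uniqueness, let $A$ be any minimizer. From $\beta=F(A)\ge x^*(A)=x^*(A\cap A_+)+x^*(A\setminus A_+)$, together with $x^*(A\cap A_+)\ge x^*(A_+)=\beta$ (all coordinates of $A_+$ are $\le 0$), I get $x^*(A\setminus A_+)\le 0$; but every coordinate of $x^*$ on $A\setminus A_+$ is strictly positive, so $A\setminus A_+=\emptyset$, i.e.\ $A\subseteq A_+$. Dually, submodularity gives $F(A\cap A_-)+F(A\cup A_-)\le F(A)+F(A_-)=2\beta$ with both terms $\ge\beta$, so $A\cap A_-$ is itself a minimizer; then $x^*(A\cap A_-)\le F(A\cap A_-)=\beta=x^*(A_-)\le x^*(A\cap A_-)$ (the last inequality since $A_-\setminus A$ carries only negative coordinates), forcing $x^*(A_-\setminus A)=0$ and hence $A_-\setminus A=\emptyset$, i.e.\ $A_-\subseteq A$. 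Thus every minimizer of $F$ lies between $A_-$ and $A_+$; as these are themselves minimizers, $A_-$ is the unique minimal and $A_+$ the unique maximal minimizer. I expect the second step to be the main obstacle: one has to apply the greedy-algorithm description of $B(F)$ with the correct orientation and with care about ties among elements sharing a value of $x^*$, and then see that the summation-by-parts identity, sharpened by the strict gaps $\mu_{j+1}-\mu_j>0$, upgrades the single global equality $x^*(E)=F(E)$ to tightness on every level set; the remaining steps are routine sign bookkeeping.
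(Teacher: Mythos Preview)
Your proof is correct. The paper itself does not prove this theorem: it is simply stated as a known result (it is classical, due to Fujishige) and then used to motivate the algorithm, so there is no argument in the paper to compare against.

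The route you take is the standard one---the variational inequality $\langle x^*,y-x^*\rangle\ge 0$ on the convex set $B(F)$, Edmonds' greedy characterization of linear minimizers over $B(F)$, the Abel-summation identity showing that every sublevel set $L_j$ of $x^*$ is tight, and then sign bookkeeping to pin down the minimizers. The concern you flag about ties is harmless: the sets $L_j$ are determined by the values of $x^*$ and do not depend on how ties are broken, and any compatible ordering makes each $L_j$ a prefix, so $y^{g}(L_j)=F(L_j)$ regardless. One optional shortcut in the final step: once you have $\beta=F(A)\ge x^*(A)$ and $x^*(A_-)=\beta$, the identity $x^*(A)-x^*(A_-)=x^*(A\setminus A_-)-x^*(A_-\setminus A)\ge 0$ already forces $A_-\setminus A=\emptyset$, so the appeal to submodularity there, while valid, is not strictly needed.
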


Equipped with this theorem we can find the minimum norm point in the base
polyhedron $B(F)$ and find the minimum of a submodular function $F$.

Here follows the description of the minimum norm point algorithm: 
Throughout the runtime, the algorithm maintains a {\emph simplex} of points
$\in 2^P$ and a current minimum norm point $\hat{x}$. With each update of the
simplex, the norm of $\hat{x}$ decreases.
\\
\\
{\emph Input:} A finite set of points $P = \{p_1, p_2, \ldots p_k\}, p_i \in
\mathbb{R}^n$ \\
{\emph Output:} The minimum norm point $x^{*}$ in the convex hull $\hat{P}$ of the
points $\{p_1, \ldots , p_k\}$
\begin{enumerate}
  \item Choose any point $p \in P$ and put $S = {p}$ and $\hat{x} = p$.
  \item Find a point $\hat{p} \in P$ that minimizes the linear function $\langle
  \hat{x}, p\rangle = \sum_i\hat{x}_i p_{i}$. If $\langle
  \hat{x}, p\rangle = \langle \hat{x}, \hat{x}\rangle$, return $x^{*} = \hat{x}$.
  Else go to step $3$.
  \item Find the minimum norm point $y$ in the affine hull of points in $S$. If
  $y$ lies in the relative interior of the convex hull of $S$, then put
  $\hat{x} = y$ and go to step $2$.
  \item Let $z$ be the point that is the nearest to $y$ among the intersection
  of the convex hull of points in $S$ and the line $[y, \hat{x}]$ between $y$
  and $\hat{x}$. Additionally, let $S' \subset S$ be the unique proper subset 
  of $S$ such that $z$ lies in the relative interior of the convex hull of $S'$.
  Put $S = S'$ and $\hat{x} = z$. Go to step $3$.
\end{enumerate}

The cycle formed by the steps $2 \leftrightarrow 3$ is called a \emph{major}
cycle and the one by steps $3 \leftrightarrow 4$ a \emph{minor} cycle. In
major/minor cycles, the \emph{simplex} size increases/decreases
correspondingly. In major cycle the \emph{simplex} increases by $1$ and in the
minor decreases by at least $1$. 

\begin{myde}
A \emph{simplex} is called a \emph{corral} if the minimum norm point lies in the
relative interior of the convex hull of the points of the \emph{simplex}.
\end{myde}

\begin{myle}
Every \emph{corral}  uniquely  determines the current minimum norm point. 
\end{myle}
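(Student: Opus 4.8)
The plan is to identify the minimum norm point attached to a corral $S$ with the orthogonal projection of the origin onto the affine hull $\mathrm{aff}(S)$, and then to invoke uniqueness of that projection. First I would recall the standard fact that the squared Euclidean norm is strictly convex, so every nonempty closed convex set — in particular the polytope $\mathrm{conv}(S)$, which is compact — contains exactly one point of minimum norm; call it $\hat{x}$. This already shows that the minimum norm point associated with $S$ is well defined; the substance of the lemma is to pin down \emph{which} point it is, purely in terms of $S$.

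Next I would use the variational (obtuse-angle) characterization of the projection onto a convex set: $\hat{x}$ is the minimum norm point of $\mathrm{conv}(S)$ if and only if $\langle \hat{x},\, p - \hat{x}\rangle \ge 0$ for every $p \in \mathrm{conv}(S)$. Because $S$ is a corral, $\hat{x}$ lies in the relative interior of $\mathrm{conv}(S)$, taken with respect to $\mathrm{aff}(S)$; hence for every direction $d$ in the linear subspace $L$ parallel to $\mathrm{aff}(S)$ there is an $\varepsilon > 0$ with $\hat{x} \pm \varepsilon d \in \mathrm{conv}(S)$. Feeding both $\hat{x} + \varepsilon d$ and $\hat{x} - \varepsilon d$ into the inequality gives $\langle \hat{x}, d\rangle \ge 0$ and $\langle \hat{x}, -d\rangle \ge 0$, so $\langle \hat{x}, d\rangle = 0$ for all $d \in L$. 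Thus $\hat{x} \in \mathrm{aff}(S)$ and $\hat{x} \perp L$, i.e. $\hat{x}$ is the orthogonal projection of the origin onto the affine flat $\mathrm{aff}(S)$.

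Finally I would observe that the projection of a point onto an affine subspace is unique: the conditions $\hat{x} \in \mathrm{aff}(S)$ and $\hat{x} \perp L$ determine $\hat{x}$ uniquely (equivalently, the norm is strictly convex on the flat $\mathrm{aff}(S)$). Since $\mathrm{aff}(S)$, and therefore this projection, depends only on $S$, the corral $S$ determines $\hat{x}$ uniquely, which is the assertion. As a byproduct, $\hat{x}$ is exactly the point $y$ computed in Step~3 of the algorithm, which is why the test ``$y$ lies in the relative interior of $\mathrm{conv}(S)$'' correctly detects corrals.

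The only step I expect to require care is the two-sided perturbation argument, where the relative-interior hypothesis built into the definition of a corral is converted into orthogonality of $\hat{x}$ to the whole direction space $L$ of $\mathrm{aff}(S)$; in particular one must make sure the relative interior is understood with respect to $\mathrm{aff}(S)$ so that the admissible perturbation directions $d$ sweep out all of $L$. Everything else is a routine appeal to strict convexity of the Euclidean norm and to the projection characterization.
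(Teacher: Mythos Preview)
The paper actually states this lemma without proof, so there is no original argument to compare against. Your proof is correct and, if anything, proves more than the bare statement requires: the claim that a corral $S$ determines a unique minimum norm point already follows from strict convexity of $\|\cdot\|^2$ on the compact convex set $\mathrm{conv}(S)$, which you note in your first paragraph. The additional work you do --- showing via the two-sided perturbation argument that for a corral this point coincides with the orthogonal projection of the origin onto $\mathrm{aff}(S)$ --- is not strictly needed for the lemma as stated, but it is exactly the right observation for the surrounding context: it explains why the affine projection $y$ computed in Step~3 is the correct candidate for $\hat{x}$, and it underpins the finiteness argument (Theorem~2), since distinct corrals then yield distinct norm values and no corral can recur. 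Your caution about taking the relative interior with respect to $\mathrm{aff}(S)$ is well placed; that is precisely what makes the perturbation directions span all of $L$.
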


\begin{myle}
After at most $n-1$ iterations in the minor cycle, the current \emph{simplex}
becomes a \emph{corral}. 
\end{myle}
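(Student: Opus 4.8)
The plan is to track the integer $|S|$, the size of the current \emph{simplex}, show that every pass through the minor cycle strictly decreases it, and bound its value at the moment the minor cycle is entered; once $|S|$ is forced down to $1$ the \emph{simplex} is automatically a \emph{corral}, and the initial bound gives the constant $n-1$.

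First I would isolate why each pass through steps $3$ and $4$ shrinks $S$. Step $4$ is executed only when step $3$ has determined that the minimum norm point $y$ of $\mathrm{aff}(S)$ is \emph{not} in the relative interior of $\mathrm{conv}(S)$. Since $\hat{x}\in\mathrm{conv}(S)$ at that moment, $\mathrm{conv}(S)\cap[y,\hat{x}]$ is a nonempty subsegment, and the point $z$ of it closest to $y$ lies on the relative boundary of $\mathrm{conv}(S)$ (it is where $[y,\hat{x}]$ exits $\mathrm{conv}(S)$, or $y$ itself when $y$ is already on that boundary). Because the points of $S$ are affinely independent --- the invariant maintained throughout --- every relative boundary point of $\mathrm{conv}(S)$ lies in the relative interior of $\mathrm{conv}(S')$ for a \emph{unique proper} subset $S'\subsetneq S$; hence after step $4$ we have $|S'|\le|S|-1$. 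I would also note that affine independence survives a major step: when step $2$ runs, $S$ is a \emph{corral}, so $\langle\hat{x},p\rangle=\langle\hat{x},\hat{x}\rangle$ for all $p\in S$, whereas the appended point $\hat{p}$ satisfies $\langle\hat{x},\hat{p}\rangle<\langle\hat{x},\hat{x}\rangle$ (otherwise the algorithm returns), so $\hat{p}\notin\mathrm{aff}(S)$.

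Next, termination and the count. We always have $|S|\ge1$, and once $|S|=1$ the set $\mathrm{aff}(S)=\mathrm{conv}(S)$ is a single point, which is its own minimum norm point and trivially lies in its own relative interior, so step $3$ immediately reports a \emph{corral}. Combining this with the previous paragraph, the minor cycle started from a \emph{simplex} $S_0$ stops after at most $|S_0|-1$ iterations. Finally $|S_0|\le n$: $S_0$ is affinely independent and all the points $p_i$ are vertices of $B(F)$, hence lie in the affine hyperplane $\{x\in\mathbb{R}^n: x(E)=F(E)\}$ of dimension $n-1$, and an affinely independent subset of an $(n-1)$-dimensional affine set has at most $n$ elements. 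Therefore the minor cycle produces a \emph{corral} within $n-1$ iterations.

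I expect the only delicate point to be the geometric claim in the second paragraph: that the $z$ produced by step $4$ is a relative boundary point of $\mathrm{conv}(S)$ sitting in the relative interior of exactly one proper sub-simplex. This is precisely where affine independence of $S$ is needed, and it is what justifies the ``decrease by at least $1$'' asserted in the description of the minor cycle. The monovariant argument and the dimension count are then routine.
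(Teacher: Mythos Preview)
Your argument is correct and follows the same idea as the paper's proof --- track $|S|$, observe it strictly decreases at each minor iteration, and note that a singleton is automatically a \emph{corral} --- though the paper compresses all of this into the single sentence ``As there will be left at most $1$ point in the \emph{simplex} $S$.'' Your justification of the bound $|S_0|\le n$ via the hyperplane $\{x:x(E)=F(E)\}$ is a detail the paper leaves entirely implicit.
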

\begin{proof}
As there will be left at most $1$ point in the \emph{simplex} S.
\end{proof}
\begin{myle}
After each iteration of step $2$ the norm of the $\hat{x}$ is decreasing.
\end{myle}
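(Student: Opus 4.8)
The plan is to exploit the defining property of the point $\hat{p}$ returned by step $2$, together with the fact that every time the algorithm sits at step $2$ the current \emph{simplex} $S$ is a \emph{corral}: it is so after step $1$ trivially, and it is so whenever we arrive from step $3$, since step $3$ only returns to step $2$ when $y$ lies in the relative interior of $\mathrm{conv}(S)$. Consequently $\hat{x}$ is the minimum norm point not merely of $\mathrm{conv}(S)$ but of the affine hull $\mathrm{aff}(S)$, so the orthogonal‑projection identity gives $\langle \hat{x}, s - \hat{x}\rangle = 0$, i.e. $\langle \hat{x}, s\rangle = \langle \hat{x}, \hat{x}\rangle$, for every $s \in S$. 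Moreover, since the algorithm did not terminate, the point $\hat{p}$ chosen in step $2$ satisfies $\langle \hat{x}, \hat{p}\rangle < \langle \hat{x}, \hat{x}\rangle$.

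First I would show that the very next value taken by $\hat{x}$ is strictly shorter. Look at the segment $x_\lambda = (1-\lambda)\hat{x} + \lambda \hat{p} \in \mathrm{conv}(S \cup \{\hat{p}\})$ for $\lambda \in [0,1]$; expanding, $\|x_\lambda\|^2 = \|\hat{x}\|^2 - 2\lambda\big(\|\hat{x}\|^2 - \langle \hat{x}, \hat{p}\rangle\big) + \lambda^2\|\hat{p}-\hat{x}\|^2$, whose $\lambda$‑derivative at $0$ equals $-2\big(\|\hat{x}\|^2 - \langle \hat{x}, \hat{p}\rangle\big) < 0$. Hence $\|x_\lambda\| < \|\hat{x}\|$ for all small $\lambda > 0$, so the minimum norm point $y$ of $\mathrm{aff}(S \cup \{\hat{p}\})$ computed in the ensuing step $3$ obeys $\|y\| < \|\hat{x}\|$. (In particular $\hat{p} \notin \mathrm{aff}(S)$, since otherwise $x_\lambda \in \mathrm{aff}(S)$ would contradict the minimality of $\hat{x}$ there.)

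Next I would carry this strict gain through the minor cycle, so that it is still present when step $2$ is reached again. The point is that along every segment used in step $4$ the norm is monotone: whenever step $3$ forms the nearest point $y$ of $\mathrm{aff}(S)$, the current $\hat{x}$ already lies in $\mathrm{aff}(S)$ — it was either produced inside $\mathrm{conv}(S)$ by the previous step $4$, or, on the first minor iteration, equals the old $\hat{x} \in \mathrm{aff}(S_{\mathrm{old}}) \subseteq \mathrm{aff}(S \cup \{\hat{p}\})$ — so $\langle y, \hat{x} - y\rangle = 0$ and $\|(1-t)y + t\hat{x}\|^2 = \|y\|^2 + t^2\|\hat{x}-y\|^2$ is nondecreasing on $[0,1]$. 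Thus the point $z$ picked in step $4$, being the one nearest $y$, satisfies $\|z\| \le \|\hat{x}\|$, and $\|\hat{x}\|$ is nonincreasing all along the minor cycle. Combined with the strict drop to $\|y\| < \|\hat{x}_{\mathrm{old}}\|$ at the first step $3$, this gives $\|\hat{x}\| < \|\hat{x}_{\mathrm{old}}\|$ the next time step $2$ runs.

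The delicate point — and the one I expect to be the real obstacle — is that the strict gain of the first step $3$ must not be wiped out by the first execution of step $4$; concretely, one must rule out $z = \hat{x}_{\mathrm{old}}$ (which, with $S' = S_{\mathrm{old}}$, would send the algorithm straight back to the same $\hat{x}$ and loop). For this I would analyse the barycentric coordinates of $y$ over $S_{\mathrm{old}} \cup \{\hat{p}\}$. Writing $\hat{p} - \hat{x}_{\mathrm{old}} = u + w$ with $u$ in the direction space of $\mathrm{aff}(S_{\mathrm{old}})$ and $w$ the component orthogonal to it (nonzero, since $\hat{p}\notin\mathrm{aff}(S_{\mathrm{old}})$), the nearest point of $\mathrm{aff}(S_{\mathrm{old}} \cup \{\hat{p}\})$ works out to be $y = \hat{x}_{\mathrm{old}} + \mu w$ with $\mu = -\langle \hat{x}_{\mathrm{old}}, w\rangle/\|w\|^2$, and $\langle \hat{x}_{\mathrm{old}}, w\rangle = \langle \hat{x}_{\mathrm{old}}, \hat{p}\rangle - \|\hat{x}_{\mathrm{old}}\|^2 < 0$, so $\mu > 0$; expanding $y$ then shows its coefficient on $\hat{p}$ equals $\mu > 0$. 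A positive coefficient on $\hat{p}$ forces the segment $[y, \hat{x}_{\mathrm{old}}]$ to lie inside $\mathrm{conv}(S_{\mathrm{old}} \cup \{\hat{p}\})$ for all parameters in some interval $(1-\varepsilon, 1]$, so the nearest such point $z$ corresponds to a parameter strictly below $1$ and, by the monotonicity above, has $\|z\| < \|\hat{x}_{\mathrm{old}}\|$ — which closes the argument.
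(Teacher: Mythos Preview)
The paper does not prove this lemma; it is stated without any justification between Lemma~2 and Theorem~2, so there is nothing in the paper to compare your argument against.

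Your proof is correct and supplies precisely what the paper omits. The overall shape---(i) a strict drop $\|y\|<\|\hat x\|$ at the first execution of step~3 after $\hat p$ is adjoined, obtained from the corral identity $\langle \hat x,s\rangle=\|\hat x\|^2$ together with $\langle \hat x,\hat p\rangle<\|\hat x\|^2$, and (ii) non-increase of $\|\hat x\|$ along the minor cycle via the projection relation $\langle y,\hat x-y\rangle=0$---is the standard route for Wolfe's method. The part you flag as ``delicate'' is the real content: ruling out $z=\hat x_{\mathrm{old}}$ at the first step~4. Your computation that the barycentric coefficient of $\hat p$ in $y$ equals $\mu=-\langle \hat x_{\mathrm{old}},w\rangle/\|w\|^2>0$ is correct (using $\hat x_{\mathrm{old}}\perp u$), and together with the strict positivity of the $S_{\mathrm{old}}$-coefficients of $\hat x_{\mathrm{old}}$ (the corral property) it indeed forces the feasible parameter interval on $[y,\hat x_{\mathrm{old}}]$ to extend strictly below $1$, so that $\|z\|<\|\hat x_{\mathrm{old}}\|$ by your monotonicity formula. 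This is exactly the step that terse presentations---including the present paper---leave out.
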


\begin{mythe}
The described minimum norm point algorithm terminates in a finite number of
steps. It is currently open to decide if the algorithm runs in polynomial time.
\end{mythe}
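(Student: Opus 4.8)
The plan is to prove finite termination by a monotonicity argument on the squared norm $\|\hat x\|^2$ together with the observation that there are only finitely many \emph{corrals}. First I would organize a run of the algorithm into \emph{phases}: a phase begins each time control reaches step~2 while the current \emph{simplex} $S$ is a \emph{corral} (in particular the first phase begins right after step~1, since a one-point \emph{simplex} is trivially a \emph{corral}), and it ends when control next reaches step~2 with a \emph{corral}. Within one phase the algorithm executes step~2 once, and then, if it does not terminate, executes step~3 on an enlarged \emph{simplex} followed by a run of minor cycles (steps $3 \leftrightarrow 4$); by Lemma~2 this run contains at most $n-1$ iterations before a \emph{corral} reappears. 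Hence each phase consists of a bounded number of elementary steps, and it suffices to show the algorithm passes through only finitely many phases.

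Next I would show that $\|\hat x\|$ at the start of a phase is strictly smaller than at the start of the previous phase. Let $S$ be the \emph{corral} opening a phase and $\hat x$ its associated minimum norm point. Since $\hat x$ lies in the relative interior of $\mathrm{CH}(S)$, the optimality condition for the minimum norm point yields $\langle \hat x, p-\hat x\rangle = 0$, i.e. $\langle \hat x, p\rangle = \langle \hat x,\hat x\rangle$, for every $p \in S$. Because $\hat x$ is a convex combination of points of $S\subseteq P$, the point $\hat p$ chosen in step~2 always satisfies $\langle\hat x,\hat p\rangle \le \langle\hat x,\hat x\rangle$, with equality exactly in the termination case; so if the algorithm does not stop, $\langle \hat x,\hat p\rangle < \langle\hat x,\hat x\rangle$, and in particular $\hat p\notin S$, so the \emph{simplex} genuinely grows to $S\cup\{\hat p\}$. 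Along the segment from $\hat x$ to $\hat p$ the derivative of $t\mapsto \|\hat x + t(\hat p-\hat x)\|^2$ at $t=0$ equals $2\langle\hat x,\hat p-\hat x\rangle < 0$, so the minimum norm point $y$ of the affine hull of $S\cup\{\hat p\}$ — which contains that entire segment — satisfies $\|y\| < \|\hat x\|$. The remaining steps of the phase do not increase the norm: step~4 replaces $\hat x$ by the nearest point to $y$ on a segment one of whose endpoints is the current $\hat x$, and step~3 re-minimizes the norm over an affine hull that still contains the current iterate. (This is precisely Lemma~3, which I would simply invoke.) Hence $\|\hat x\|$ at the \emph{corral} closing the phase is strictly below its value at the \emph{corral} opening it.

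Finally I would conclude: by Lemma~1 each \emph{corral} determines a unique current minimum norm point, so distinct phases must have distinct \emph{corrals}, since their associated norms differ. Each \emph{corral} is a subset of the finite point set $P$, so there are at most $2^{|P|}$ \emph{corrals}, hence at most $2^{|P|}$ phases, each of bounded length; therefore the algorithm halts after finitely many steps. For the second assertion there is nothing to prove: it only records that the bound we have extracted is the exponential $2^{|P|}$, and that — as with the simplex method and related LP-type procedures — no polynomial bound on the number of phases is presently known.

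The step I expect to be the main obstacle is the strict-decrease claim for an entire phase, specifically making the minor-cycle portion airtight: one must check that every re-solve in step~3 and every retraction in step~4 keeps the iterate inside an affine/convex hull containing the previous iterate, so that $\|\hat x\|^2$ is genuinely non-increasing through the minor cycle and the single strict drop produced by step~2's choice of $\hat p$ survives to the end of the phase. Everything else is bookkeeping or a direct appeal to Lemmas~1, 2 and~3.
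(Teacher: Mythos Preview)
Your argument is correct and is the standard finite--termination proof for Wolfe's minimum norm point algorithm. Note, however, that the paper does not actually supply a proof of this theorem: it is stated immediately after Lemmas~1--3 and left as a consequence of them, with the details omitted. What you have written is precisely the missing derivation --- you organize the run into \emph{corral}-to-\emph{corral} phases, use Lemma~2 to bound the length of each phase, use (and partially re-derive) Lemma~3 to obtain a strict drop in $\|\hat x\|$ across each phase, and then use Lemma~1 together with the finiteness of $2^{|P|}$ to bound the number of phases. So your approach and the paper's implicit approach coincide; you have simply filled in what the paper leaves to the reader.

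One small remark on your use of Lemma~3: as stated in the paper (``after each iteration of step~2 the norm of $\hat x$ is decreasing''), Lemma~3 already asserts the per-phase strict decrease you spend a paragraph establishing, rather than merely the within-phase non-increase you invoke it for. In other words, your middle paragraph is a proof of Lemma~3, not an application of it. This does not affect correctness --- your direct argument for the strict drop after adding $\hat p$ is right, and the monotonicity through the minor cycle follows from the convexity of $\|\cdot\|^2$ along the segment $[\hat x,y]$ together with $y$ being the affine-hull minimizer --- but you could shorten the write-up by simply citing Lemma~3 for the whole strict-decrease claim and moving on to the counting argument.
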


\subsection*{Implementation}

Step $2$ of the algorithm requires a linear optimization, which can be done by
computing $\langle \hat{x}, p\rangle = \sum_i\hat{x}_i p_{i}$ for all the points
in $P$, however the number of points can be exponential. 

In the case the set $P$ is given implicitly, such as a number of extreme
points of a polytope $Q$. 

Luckily, for base polyhedra associated with submodular functions this problem
can be solved greedily as was shown by Edmonds: \\
\\
{\emph Input:} $w \in R^E$, submodular function $F$ \\
{\emph Output:} An optimal $x^{*} \in B(F)$ that minimizes $\sum_{e \in E}w(e)x(e)$
\begin{enumerate}
  \item Find an ordering of $e_1, e_2, \ldots, e_n$ s.t. 
    \begin{equation}
    w(e_1) \leq w(e_2) \leq  \ldots w(e_n)
    \end{equation}
  \item Compute $x^{*}$ as follows:
    \begin{equation}
    x^{*}(e_i) = F(\{e_1, e_2, \ldots e_i\})-F(\{e_1, e_2, \ldots , e_{i-1}\}), \:
    (i = 1, 2, \ldots n)
    \end{equation}    
\end{enumerate}

\begin{myle}
The resulting $x^{*}$ lies in the base polyhedron $B(F)$ and minimizes $\sum_{e \in
E}w(e)x(e)$
\end{myle}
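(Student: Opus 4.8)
The plan is to prove Edmonds' greedy theorem in two parts: first that the greedy solution $x^{*}$ lies in $B(F)$, and second that it minimizes the linear functional $\sum_{e\in E}w(e)x(e)$ over $B(F)$.

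\textbf{Feasibility.} First I would observe that by construction the partial sums telescope: for any $i$, $x^{*}(\{e_1,\ldots,e_i\}) = F(\{e_1,\ldots,e_i\})$, and in particular $x^{*}(E)=F(E)$, so the tight constraint defining $B(F)$ inside $P(F)$ is satisfied automatically. The work is in checking $x^{*}(X)\le F(X)$ for every $X\subseteq E$. I would prove this by induction on $|X|$, or equivalently by peeling off the largest-index element. Let $e_j$ be the element of $X$ with the largest index in the greedy order, write $X' = X\setminus\{e_j\}$ and let $U_j=\{e_1,\ldots,e_j\}$, $U_{j-1}=\{e_1,\ldots,e_{j-1}\}$. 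Then $x^{*}(X)=x^{*}(X')+x^{*}(e_j)=x^{*}(X')+F(U_j)-F(U_{j-1})$. Now apply submodularity to the pair $X'$ and $U_{j-1}$: since every element of $X'$ has index $<j$, we have $X'\subseteq U_{j-1}$, so $X'\cap U_{j-1}=X'$ and $X'\cup U_{j-1}=U_{j-1}$ — that pairing is degenerate. Instead I would pair $X$ with $U_{j-1}$: here $X\cap U_{j-1}=X'$ and $X\cup U_{j-1}=U_j$ (because $X = X'\cup\{e_j\}$ and $X'\subseteq U_{j-1}$ while $e_j\notin U_{j-1}$). Submodularity gives $F(X)+F(U_{j-1})\ge F(X')+F(U_j)$, i.e. $F(U_j)-F(U_{j-1})\le F(X)-F(X')$. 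Combining, $x^{*}(X)\le x^{*}(X') + F(X)-F(X')$, and the inductive hypothesis $x^{*}(X')\le F(X')$ finishes the step; the base case $X=\emptyset$ uses $F(\emptyset)=0$.

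\textbf{Optimality.} For the second part I would use LP duality / a direct exchange argument. The minimization of $\sum_e w(e)x(e)$ over $B(F)$ is a linear program; I would exhibit a dual feasible solution whose objective matches $\langle w, x^{*}\rangle$. Abbreviating $w_i=w(e_i)$ and using Abel summation, write $\sum_i w_i x^{*}(e_i) = \sum_{i=1}^{n-1}(w_i-w_{i+1})F(U_i) + w_n F(E)$. Since the ordering makes $w_i-w_{i+1}\le 0$ for $i<n$, these are exactly the nonnegative multipliers (on the constraints $x(U_i)\le F(U_i)$ for $i<n$, and the free multiplier $w_n$ on $x(E)=F(E)$) that certify optimality: for any $x\in B(F)$, $\sum_i w_i x(e_i) = \sum_{i=1}^{n-1}(w_i-w_{i+1})x(U_i)+w_n x(E) \ge \sum_{i=1}^{n-1}(w_i-w_{i+1})F(U_i)+w_n F(E)$, where the inequality flips correctly because each coefficient $w_i-w_{i+1}\le 0$ multiplies $x(U_i)\le F(U_i)$. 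Hence $\langle w,x\rangle \ge \langle w,x^{*}\rangle$ for all $x\in B(F)$.

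\textbf{Main obstacle.} The routine calculations (telescoping, Abel summation) are painless; the one place that needs care is getting the submodularity pairing in the feasibility step to produce the inequality $F(U_j)-F(U_{j-1})\le F(X)-F(X')$ rather than a vacuous statement — one must pair $X$ with $U_{j-1}$, not $X'$ with $U_{j-1}$. A secondary subtlety worth a sentence is handling possible ties in the $w$-values: any tie-breaking gives a valid greedy ordering, and the resulting $x^{*}$ is still optimal (though the minimizer need not be unique on the tie classes), so the statement is about \emph{an} optimal point, which is all that is claimed.
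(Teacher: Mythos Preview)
Your argument is correct and is the standard proof of Edmonds' greedy theorem: the feasibility step uses submodularity of $F$ applied to the pair $(X,U_{j-1})$ exactly as you do, and the optimality step is the usual Abel summation that rewrites $\langle w,x\rangle$ as a combination of the chain values $x(U_i)$ with coefficients $w_i-w_{i+1}\le 0$, so that $x(U_i)\le F(U_i)$ pushes the value above $\langle w,x^{*}\rangle$. One cosmetic slip: you call the $w_i-w_{i+1}$ ``nonnegative multipliers'' immediately after noting they are $\le 0$; the sign convention for dual variables on $\le$-constraints in a minimization LP is nonpositive, but your direct inequality chain is right regardless of how the duality is phrased.

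As for comparison with the paper: there is nothing to compare. The paper states this lemma without proof, citing it as Edmonds' result, so your write-up simply supplies the omitted classical argument.
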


Step $3$ requires solving the following optimization problem:

\begin{eqnarray*}
\min & \| x \| & \\
x & = & \sum_{1 \leq i \leq n} \alpha_i p_i \\
\sum_{1 \leq i \leq n} \alpha_i & = & 1, p_i \in P, \alpha_i \in \mathbb{R}
\end{eqnarray*}

Equivalently, the problem can we rewritten as

\begin{eqnarray*}
\min & \| x \| & \\
x & = & p_1 + \sum_{1 \leq i \leq n} \alpha_i (p_i-p_1) \: \Leftrightarrow
\: p_1+\sum_{2 \leq i \leq n} \alpha_i (p_i-p_1) \\
\sum_{1 \leq i \leq n} \alpha_i & = & 1, p_i \in P, \alpha_i \in \mathbb{R}
\: \Leftrightarrow \: p_i \in P, \alpha_i \in \mathbb{R}
\end{eqnarray*}

Consider the subspace of vectors $p_i-p_1$. Let $p_1 =
p_1^{\parallel}+p_1^{\perp}$, such that $\langle p_1^{\perp}, p_i-p_1 \rangle =
0$ for all $2 \leq i \leq n$.
\begin{myle}
There exists a unique such decomposition $p_1 =
p_1^{\parallel}+p_1^{\perp}$
\end{myle}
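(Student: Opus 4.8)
The plan is to recognize this as the standard orthogonal decomposition of a vector with respect to a subspace of the finite-dimensional inner product space $\mathbb{R}^n$. First I would make explicit what ``parallel'' is meant to mean here: let $V = \mathrm{span}\{p_2 - p_1, p_3 - p_1, \ldots, p_n - p_1\} \subseteq \mathbb{R}^n$, and require $p_1^{\parallel} \in V$ together with $p_1^{\perp} \in V^{\perp}$. I would also note that the stated condition $\langle p_1^{\perp}, p_i - p_1 \rangle = 0$ for all $2 \leq i \leq n$ is, by bilinearity of the inner product, equivalent to $p_1^{\perp} \in V^{\perp}$, so the claim is exactly that $\mathbb{R}^n = V \oplus V^{\perp}$ applied to the vector $p_1$.

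For existence I would fix an orthonormal basis $u_1, \ldots, u_m$ of $V$ (produced, say, by Gram--Schmidt on the spanning vectors $p_i - p_1$), set $p_1^{\parallel} = \sum_{j=1}^{m} \langle p_1, u_j \rangle\, u_j$ and $p_1^{\perp} = p_1 - p_1^{\parallel}$. Then $p_1^{\parallel} \in V$ by construction, and a one-line computation gives $\langle p_1^{\perp}, u_j \rangle = 0$ for every $j$; since the $u_j$ span $V$, this yields $p_1^{\perp} \perp V$, and in particular $\langle p_1^{\perp}, p_i - p_1 \rangle = 0$ for all $i$.

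For uniqueness I would argue in the usual way: if $p_1 = a + b = a' + b'$ with $a, a' \in V$ and $b, b' \in V^{\perp}$, then $a - a' = b' - b$ lies in both $V$ and $V^{\perp}$, hence is orthogonal to itself and therefore zero; so $a = a'$ and $b = b'$. This gives the desired uniqueness of $p_1^{\parallel}$ and $p_1^{\perp}$.

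I do not expect a genuine obstacle here, since the statement is essentially a reformulation of the orthogonal complement decomposition. The one point that needs care is precisely the one flagged above: the lemma as phrased constrains only $p_1^{\perp}$ (via the orthogonality relations), so I must state and use the implicit requirement $p_1^{\parallel} \in V$; otherwise one could add an arbitrary element of $V^{\perp}$ to $p_1^{\perp}$ and subtract it from $p_1^{\parallel}$, and uniqueness would fail.
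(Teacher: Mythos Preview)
Your argument is correct and complete: you correctly recognize the statement as the orthogonal decomposition $\mathbb{R}^n = V \oplus V^{\perp}$ applied to $p_1$, where $V=\mathrm{span}\{p_i-p_1:2\le i\le n\}$, and your existence (via an orthonormal basis of $V$) and uniqueness (via $V\cap V^{\perp}=\{0\}$) arguments are the standard ones. You also rightly flag that the implicit hypothesis $p_1^{\parallel}\in V$ must be made explicit for uniqueness to hold.

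As for comparison with the paper: the paper states this lemma without proof, so there is nothing to compare against. Your write-up simply supplies the routine verification that the paper leaves to the reader.
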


Then it follows that 

\begin{myle}
Denote $S_P$ the subspace of vectors $p_i-p_1, 2 \leq i \leq n$. Also, let
$v \in \mathbb{R}^n$ belong to $S_P$, s.t. $v = p_1^{\parallel}+\sum_{2 \leq i
\leq n} \alpha_i (p_i-p_1)$. For a minimum norm point $x$,
\begin{equation}
\min \| x \| = \|p_1^{\perp}+p_1^{\parallel}+\sum_{2 \leq i \leq n} \alpha_i (p_i-p_1) \| = 
\| p_1^{\perp} \|+\| v \| \geq \| p_1^{\perp} \|
\end{equation}
\end{myle}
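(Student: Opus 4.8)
The plan is to combine the orthogonal decomposition from the preceding lemma with the Pythagorean theorem. First I would observe that any $x$ feasible for the step-$3$ program lies in the affine hull of $p_1,\dots,p_n$, hence can be written as $x = p_1 + \sum_{2\le i\le n}\alpha_i(p_i-p_1)$ with unconstrained real coefficients $\alpha_i$. Substituting the (unique) decomposition $p_1 = p_1^{\parallel}+p_1^{\perp}$ gives $x = p_1^{\perp} + v$ with $v := p_1^{\parallel}+\sum_{2\le i\le n}\alpha_i(p_i-p_1)$. Since $p_1^{\parallel}\in S_P$ and each $p_i-p_1\in S_P$, we have $v\in S_P$; conversely, because the vectors $p_i-p_1$ span $S_P$ and $p_1^{\parallel}\in S_P$, the point $v$ ranges over all of $S_P$ as the $\alpha_i$ vary over $\mathbb{R}$.

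Next I would use orthogonality: the defining property of the decomposition is $\langle p_1^{\perp}, p_i-p_1\rangle = 0$ for all $i$, so $p_1^{\perp}$ is orthogonal to the whole subspace $S_P$, and in particular $\langle p_1^{\perp}, v\rangle = 0$. The Pythagorean theorem then gives $\|x\|^2 = \|p_1^{\perp}+v\|^2 = \|p_1^{\perp}\|^2 + \|v\|^2 \ge \|p_1^{\perp}\|^2$, i.e. $\|x\| \ge \|p_1^{\perp}\|$ for every feasible $x$, the cross term in the displayed identity vanishing by the orthogonality just noted.

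Finally, to turn the inequality into the exact value $\min\|x\| = \|p_1^{\perp}\|$ I would exhibit the optimizer: since $v$ sweeps out all of $S_P$, in particular $v=0$ is attained by choosing the $\alpha_i$ so that $\sum_{2\le i\le n}\alpha_i(p_i-p_1) = -p_1^{\parallel}$, and at that point $x = p_1^{\perp}$. Uniqueness of the minimizer then follows because $\|v\|^2$ is strictly convex on $S_P$. I do not anticipate a genuine obstacle; the single point that must be handled with care is that the feasible set for $v$ is the entire subspace $S_P$ rather than a proper affine translate of it, and this is precisely where the previous lemma's conclusion that $p_1^{\parallel}\in S_P$ is used.
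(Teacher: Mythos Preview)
Your proposal is correct and matches the paper's approach exactly: the paper gives no proof beyond the phrase ``Then it follows that'' before the lemma and the remark ``Clearly, the inequality is tight and holds for $v=\vec{0}$'' immediately after it, and your orthogonal-decomposition-plus-Pythagoras argument is precisely what is being left implicit. One small note in your favor: your identity $\|p_1^{\perp}+v\|^2=\|p_1^{\perp}\|^2+\|v\|^2$ is the correct one, whereas the paper's displayed middle equality without squares is a slip, since orthogonality yields Pythagoras rather than additivity of norms.
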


Clearly, the inequality is tight and holds for $v = \vec{0}$, hence the
optimization problem is minimized for $x = p_1^{\perp} = p_1-p_1^{\parallel}$.
Note that the choice of $p_1$ was completely arbitrary, and any vector $p_i$
could be chosen.

Finding $p^{\parallel}$ is known as projection onto the subspace and can be
found as follows:

\begin{myle}
Let $M = \begin{bmatrix}
  p2-p1 & p3-p1 & \ldots & p_n-p1 \\
 \end{bmatrix} \in R^{n(k-1)}$, then projection of $p_1$ onto the subspace is 
 
 \begin{equation}
 p_1^{\parallel} = M(M^{T}M)^{-1}M^{T}p_1
 \end{equation}
\end{myle}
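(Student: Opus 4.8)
The plan is to reduce the claim to the normal equations for an orthogonal projection. Write $S_P$ for the column space of $M$, which by construction is exactly the linear span of the vectors $p_i-p_1$, $2\le i\le n$. By the preceding lemma we already know that $p_1$ admits a unique decomposition $p_1=p_1^{\parallel}+p_1^{\perp}$ with $p_1^{\parallel}\in S_P$ and $\langle p_1^{\perp},p_i-p_1\rangle=0$ for all $i$; the task is only to exhibit a closed form for $p_1^{\parallel}$.

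First I would establish that $M^{T}M$ is invertible. Since the points of the current \emph{simplex} are affinely independent (an invariant maintained by steps $3$ and $4$), the difference vectors $p_2-p_1,\ldots,p_n-p_1$ are linearly independent, so $M$ has full column rank $k-1$. Hence for every nonzero $c\in\mathbb{R}^{k-1}$ we have $Mc\neq\vec{0}$ and therefore $c^{T}M^{T}Mc=\|Mc\|^2>0$, so $M^{T}M$ is positive definite and in particular invertible.

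Next, because $p_1^{\parallel}\in S_P=\mathrm{col}(M)$, write $p_1^{\parallel}=Mc$ for some $c\in\mathbb{R}^{k-1}$. The orthogonality condition $\langle p_1^{\perp},p_i-p_1\rangle=0$ for all $i$ is equivalent to $M^{T}p_1^{\perp}=\vec{0}$, i.e. $M^{T}(p_1-Mc)=\vec{0}$, which rearranges to the normal equations $M^{T}Mc=M^{T}p_1$. Using the invertibility just proved, $c=(M^{T}M)^{-1}M^{T}p_1$, and therefore
\begin{equation*}
p_1^{\parallel}=Mc=M(M^{T}M)^{-1}M^{T}p_1 ,
\end{equation*}
as claimed. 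As a consistency check one verifies directly that this vector lies in $\mathrm{col}(M)$ (immediate) and that $p_1-M(M^{T}M)^{-1}M^{T}p_1$ is annihilated by $M^{T}$ (substitute and cancel), so it coincides with $p_1^{\perp}$ by uniqueness.

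The only delicate point — hence the main obstacle — is the invertibility of $M^{T}M$, which rests entirely on the affine independence of the \emph{simplex} vertices. Should that fail, $(M^{T}M)^{-1}$ would not exist and the formula would have to be stated with the Moore--Penrose pseudoinverse in place of $(M^{T}M)^{-1}$; but within the execution of the algorithm this degenerate case does not occur, so the stated formula applies as written.
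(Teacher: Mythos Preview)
Your argument is correct and is the standard derivation of the least-squares/orthogonal projection formula via the normal equations; the invertibility justification through affine independence of the simplex vertices is exactly the right observation. The paper itself states this lemma without proof, treating it as a known fact from linear algebra, so your proposal in fact supplies more detail than the original.
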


It is however inefficient to find the projection in this way, as we would need
to compute the inverse of the \emph{hat} matrix $(M^{T}M)$. Instead, we can
solve the following system of equations:

\begin{eqnarray}
(M^{T}M) y = M^{T}p_1 \\
p_1^{\parallel} = M y 
\end{eqnarray}

The system of equations above is usually solved using the Gaussian elimination
process of the matrix on the left hand side and back substitution subsequently. 
For general matrices, straightforward Gaussian elimination requires $O(n^3)$
operations or more precisely $\approx 2/3 n^3$ operations. Matrix vector multiplication takes $n k$
operations where the $n, k$ are the matrix dimensions.

The step $4$ of the algorithm, we can determine the point $z$ as follows:
\begin{myle}
Let 
\begin{equation}
\hat{x} = \sum_i \lambda_i p_i, \: y = \sum_i \mu_i p_i
\end{equation} 
then $z$ can be determined such that 
\begin{equation}
z = (1-\beta)\hat{x}+\beta y, \:
(1-\beta)\lambda_i+\beta \mu_i \geq 0, \forall i
\end{equation} and $\beta$ is large as possible.
\end{myle}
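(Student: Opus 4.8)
The plan is to reparametrize the segment and push the barycentric coordinates through. Write a generic point of the line through $\hat{x}$ and $y$ as $z(\beta)=(1-\beta)\hat{x}+\beta y$, so that $\beta=0$ recovers $\hat{x}$ and $\beta=1$ recovers $y$. Substituting $\hat{x}=\sum_i\lambda_i p_i$ and $y=\sum_i\mu_i p_i$ gives
\begin{equation*}
z(\beta)=\sum_i\bigl((1-\beta)\lambda_i+\beta\mu_i\bigr)p_i .
\end{equation*}
Since $\hat{x}$ and $y$ lie in the affine hull of $S$ we have $\sum_i\lambda_i=\sum_i\mu_i=1$, hence the coefficients of $z(\beta)$ sum to $1$ for every $\beta$; thus $z(\beta)$ is always in the affine hull of $S$, and $z(\beta)\in CH(S)$ exactly when $(1-\beta)\lambda_i+\beta\mu_i\ge 0$ for all $i$. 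This already exhibits $z$ in the asserted form with the asserted nonnegativity constraints; the remaining content is that the \emph{largest} feasible $\beta$ picks out the point $z$ prescribed in step $4$.

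First I would note that the feasible $\beta$ form an interval containing $0$: each $f_i(\beta)=\lambda_i+\beta(\mu_i-\lambda_i)$ is affine with $f_i(0)=\lambda_i\ge 0$ (because $\hat{x}\in CH(S)$), so either $\mu_i\ge\lambda_i$ and $f_i\ge 0$ holds for all $\beta\ge 0$, or $\mu_i<\lambda_i$ and $f_i\ge 0$ is equivalent to $\beta\le\lambda_i/(\lambda_i-\mu_i)$. Therefore the maximum feasible $\beta$ exists (it is a minimum of finitely many linear bounds) and equals
\begin{equation*}
\beta=\min\Bigl\{1,\ \min_{i:\,\mu_i<\lambda_i}\tfrac{\lambda_i}{\lambda_i-\mu_i}\Bigr\}.
\end{equation*}
Moving from $\hat{x}$ toward $y$ along the segment is precisely increasing $\beta$ from $0$, so among the points of $CH(S)\cap[y,\hat{x}]$ the one closest to $y$ is $z=z(\beta)$ for this maximal $\beta$. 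To pin down $S'$, set $S'=\{\,i:(1-\beta)\lambda_i+\beta\mu_i>0\,\}$; then $z$ is a strictly positive convex combination of $\{p_i:i\in S'\}$, hence lies in the relative interior of their convex hull. Maximality of $\beta$ forces some coefficient to vanish (or $\beta=1$, in which case $z=y$ sits on the boundary of $CH(S)$), and since $y$ does not lie in the relative interior of $CH(S)$ whenever step $4$ is reached, $S'\subsetneq S$.

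The main obstacle is the boundary bookkeeping rather than the algebra: one must rule out $\beta=0$ (i.e.\ $z=\hat{x}$, no progress) on the first minor iteration, when $\hat{x}$ sits on a proper face of the enlarged $CH(S)$ and some $\lambda_j$ vanishes, and one must check consistency with the geometric setup — that $\hat{x}$ is the current minimum norm point and $y$ is the minimum norm point of the affine hull lying outside its relative interior. This is where one invokes the structural facts about $\hat{x}$ and $y$ (in particular that the point added in step $2$ receives strictly positive weight in $y$), not merely the displayed linear computation.
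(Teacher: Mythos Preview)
The paper states this lemma without proof, so there is no reference argument to compare against. Your sketch is correct: the reparametrization $z(\beta)=(1-\beta)\hat{x}+\beta y$ and the passage to barycentric coordinates is exactly the standard derivation underlying Wolfe's minor cycle, and your formula $\beta=\min\bigl\{1,\ \min_{i:\,\mu_i<\lambda_i}\lambda_i/(\lambda_i-\mu_i)\bigr\}$ is the right one (note it relies on affine independence of the $p_i$, which the paper's use of the word \emph{simplex} implicitly assumes). Your closing paragraph on the boundary bookkeeping---ruling out $\beta=0$ on the first minor iteration via the strict positivity of the newly added vertex's weight in $y$---already goes beyond what the paper records and is the genuinely delicate point in proving progress; the paper does not address it at all.
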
 

\subsection*{Improvement idea}

The improvement is based on the following idea known as Sherman-Morrison-Woodbury matrix
inverse update:

\begin{myle}
Let $M = \begin{bmatrix}
  A & U \\
  V & D
\end{bmatrix}$ then, $M^{-1} = \begin{bmatrix}
  A^{-1}+A^{-1}UC^{-1}V A^{-1} & -A^{-1} U C^{-1} \\
  -C^{-1} V A^{-1} & C^{-1}
\end{bmatrix}$ where $C = D-V A^{-1} U$
\end{myle}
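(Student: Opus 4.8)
The plan is to prove the block inversion formula by exhibiting an explicit block $LDU$ factorization of $M$ and then inverting it one factor at a time; this route also makes transparent the hypotheses that are implicitly needed, namely that $A$ is invertible (so that $A^{-1}U$, $VA^{-1}$ and the Schur complement are defined) and that $C = D - VA^{-1}U$ is invertible (equivalently, that $M$ itself is invertible). I would state these two assumptions explicitly at the outset.

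\emph{First step: the factorization.} I would claim and then verify by direct block multiplication that
\begin{equation}
M = \begin{bmatrix} I & 0 \\ V A^{-1} & I \end{bmatrix}
    \begin{bmatrix} A & 0 \\ 0 & C \end{bmatrix}
    \begin{bmatrix} I & A^{-1} U \\ 0 & I \end{bmatrix}.
\end{equation}
Multiplying the right-hand side out, the $(1,1)$ block is $A$, the $(1,2)$ block is $A\cdot A^{-1}U = U$, the $(2,1)$ block is $VA^{-1}\cdot A = V$, and the $(2,2)$ block is $VA^{-1}U + C = D$ by the definition of $C$, so the identity holds. This is the only genuine computation in the argument and it is entirely routine.

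\emph{Second step: invert each factor.} A block unitriangular matrix $\begin{bmatrix} I & 0 \\ X & I \end{bmatrix}$ has inverse $\begin{bmatrix} I & 0 \\ -X & I \end{bmatrix}$, and likewise $\begin{bmatrix} I & Y \\ 0 & I \end{bmatrix}^{-1} = \begin{bmatrix} I & -Y \\ 0 & I \end{bmatrix}$, while the block-diagonal factor inverts to $\mathrm{diag}(A^{-1}, C^{-1})$. Using $(XYZ)^{-1} = Z^{-1}Y^{-1}X^{-1}$ on the factorization above gives
\begin{equation}
M^{-1} = \begin{bmatrix} I & -A^{-1} U \\ 0 & I \end{bmatrix}
         \begin{bmatrix} A^{-1} & 0 \\ 0 & C^{-1} \end{bmatrix}
         \begin{bmatrix} I & 0 \\ -V A^{-1} & I \end{bmatrix}.
\end{equation}
Carrying out these two further block multiplications — again short and mechanical — yields the $(1,1)$ entry $A^{-1} + A^{-1}UC^{-1}VA^{-1}$, the $(1,2)$ entry $-A^{-1}UC^{-1}$, the $(2,1)$ entry $-C^{-1}VA^{-1}$, and the $(2,2)$ entry $C^{-1}$, which is exactly the claimed expression.

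Honestly there is no deep obstacle here; the entire proof is bookkeeping with $2\times 2$ block matrices, and the closest thing to a subtlety is simply keeping the invertibility hypotheses on $A$ and $C$ in view and checking the cancellation $VA^{-1}U + C = D$. An alternative that avoids even mentioning the factorization is to take the claimed expression for $M^{-1}$ as given and verify directly that $MM^{-1} = I$ by expanding all four blocks of the product and using $C = D - VA^{-1}U$; I prefer the $LDU$ route because it explains where the formula comes from rather than merely confirming it.
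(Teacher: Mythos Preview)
Your proof is correct and complete: the block $LDU$ factorization is a standard and clean derivation of the Schur-complement inverse formula, and your verification is sound. The paper itself does not prove this lemma at all --- it simply records it as the well-known Sherman--Morrison--Woodbury (blockwise) inverse identity and proceeds to use it --- so your argument supplies strictly more than the paper does; there is nothing to compare against beyond noting that the paper treats the statement as background.
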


Now, notice that during the runtime of our algorithm, in the steps $2
\leftrightarrow 3$ we only add $1$ column to the matrix $S$ and in the steps $3
\leftrightarrow 4$ we delete at least $1$ column of the matrix. With the lemma 
above, we could update the inverse of the matrix $M^{T}M$ and solve the
system of equations $(6), (7)$ more efficiently.

\subsection*{Precise formulation of updates}

As was demonstrated above, it is possible to update the matrix inverse using the
blockwise approach. The matrix that we are dealing with is of the form $M^T M$.

Without loss of generality, suppose that a column $v$ is appended to the matrix
$M$ as the last column, i.e. $M' = \begin{bmatrix} M & v \\ \end{bmatrix}$. Let
us call such an update a \emph{rank-up} update. Then, $M'^{T} M' =
\begin{bmatrix} M^{T}M & M^{T} v
\\
v^T M & v^T v \end{bmatrix}$

Then, simply substituting $A = M^{T}M$, $V = v^{T}M$, $U = M^{T}v$ and $C = v^T
v$ we can apply the Sherman-Morrison-Woodbury matrix
inverse update formula.

Now, suppose, the last $k$ columns of the matrix $M$ are removed and let the new
matrix be $N$. We call such an update a \emph{rank-down} update. Then $M =
\begin{bmatrix} N & K \\
\end{bmatrix}$ and $M^{T} M = \begin{bmatrix} N^{T}N & N^{T}K \\ K^{T}N & K^{T}K \end{bmatrix}$.

Now, notice that, in the Sherman-Morrison-Woodbury update, the lower right,
upper right and lower left blocks of the matrix multiplied in the following way:

\begin{equation}
(A^{-1} U C^{-1})(C^{-1})^{-1}(-C^{-1} V A^{-1}) = A^{-1}UC^{-1}V A^{-1}
\end{equation}

which is exactly the term of the upper left block of the matrix 

\begin{equation}
M^{-1} = \begin{bmatrix}
  A^{-1}+A^{-1}UC^{-1}V A^{-1} & -A^{-1} U C^{-1} \\
  -C^{-1} V A^{-1} & C^{-1}
\end{bmatrix}
\end{equation}

Hence, knowing the inverse of the matrix $M^{T} M
= \begin{bmatrix} N^{T}N & N^{T}K \\ K^{T}N & K^{T}K \end{bmatrix}$, we can find
out the inverse of the matrix $N^{T}N$:

Let 
\begin{equation}
(M^{T} M)^{-1} = \begin{bmatrix} P & Q \\ Q^{T} & R \end{bmatrix}
\end{equation}

Then, the inverse of $N^{T}N = P-Q^T R^{-1} Q$. Note that, we are again faced
with the problem of taking the inverse of a matrix $R$ and matrix
multiplications. Note that a rank-down update by $k$ columns can be realized by
a series of $k$ rank-down updates which remove only a single column. 

\begin{myle}
The running time of a single rank-up or rank-down operations that add or remove
a single column is $O(n^2)$.
\end{myle}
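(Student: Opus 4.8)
The structural fact to exploit is that the \emph{simplex} $S$ is always a set of affinely independent points in $\mathbb{R}^n$, so $k := |S| \le n+1$ and the column vectors $p_i - p_1$ spanning $S_P$ are linearly independent; hence every matrix occurring below has both dimensions bounded by $n+1 = O(n)$. I would also use throughout that the inverse $(M^T M)^{-1}$ is kept from the previous iteration and is never recomputed from scratch; an update only has to transform it into the inverse of the enlarged or reduced Gram matrix.

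For a \emph{rank-up} update we append a single column $v$, so in $M'^{T} M' = \begin{bmatrix} M^{T}M & M^{T}v \\ v^{T}M & v^{T}v \end{bmatrix}$ we have $A = M^{T}M$ with $A^{-1}$ already known, $U = M^{T}v$ a column vector, $V = v^{T}M = U^{T}$ a row vector, and $C = v^{T}v - VA^{-1}U$ a \emph{scalar}. I would then bound the Sherman--Morrison--Woodbury formula term by term: forming $U = M^{T}v$ costs $O(nk)$; forming $w := A^{-1}U$ costs $O(k^2)$; the Schur complement $C$ is one more inner product $V w$, i.e. $O(k)$, and $C^{-1}$ is a single division. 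The only place where a cubic cost could appear is the correction block $A^{-1}UC^{-1}VA^{-1}$, and the point the proof must make explicit is that it has to be associated as $(A^{-1}U)\,C^{-1}\,(VA^{-1}) = C^{-1}\, w\, w^{T}$, i.e. as a scalar multiple of the outer product of the already-computed vector $w$ with itself, which costs $O(k^2)$ rather than $O(k^3)$. Adding this rank-one correction to $A^{-1}$ and assembling the four blocks of $M'^{T}M'^{-1}$ are again $O(k^2)$. Since $k = O(n)$, the total is $O(n^2)$.

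For a \emph{rank-down} update deleting a single column we have $M = \begin{bmatrix} N & K \end{bmatrix}$ with $K$ a single column, so in $(M^{T}M)^{-1} = \begin{bmatrix} P & Q \\ Q^{T} & R \end{bmatrix}$ the block $R$ is a scalar and $Q$ a column vector. By the block-inverse identity derived above, $(N^{T}N)^{-1}$ equals $P$ corrected by a term of the form $Q R^{-1} Q^{T}$, which is again just $P$ minus a scalar multiple of the outer product $Q Q^{T}$: the outer product, the scaling by $R^{-1}$, and the subtraction from $P$ are each $O((k-1)^2) = O(n^2)$. A rank-down by several columns is then a sequence of such single-column steps, which is exactly why stating the lemma at the single-column granularity is the right formulation.

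The only genuine obstacle is this associativity bookkeeping: written naively, $A^{-1}UC^{-1}VA^{-1}$ and $QR^{-1}Q^{T}$ look like products of full $O(n)\times O(n)$ matrices and would give $O(n^3)$, so the proof has to emphasize that in the single-column case $U$, $V$, $Q$ are vectors and $C$, $R$ are scalars, turning these into rank-one updates. Everything else is a routine accounting of matrix--vector products, outer products, and matrix additions on $O(n)\times O(n)$ arrays, each of which is $O(n^2)$.
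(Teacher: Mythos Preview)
Your proof is correct and follows essentially the same line as the paper's: both hinge on the observation that in the single-column case the Sherman--Morrison--Woodbury blocks $U$, $V$, $Q$ are vectors and $C$, $R$ scalars, so the dangerous-looking products $A^{-1}UC^{-1}VA^{-1}$ and $QR^{-1}Q^{T}$ collapse to rank-one outer products computable in $O(n^2)$ once the multiplications are associated correctly. Your write-up is somewhat more explicit---you spell out $w = A^{-1}U$ and the outer product $C^{-1}ww^{T}$, and you invoke the bound $k\le n+1$ from affine independence---whereas the paper just says ``given that the multiplications are realized as suggested by the placement of brackets,'' but the argument is the same.
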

\begin{proof}
Let us firstly consider the rank-up update. Matrices $V = v^{T}M$ and $U
= M^{T}v$ can be computed in time $O(n^2)$ as computing them corresponds to
matrix-vector multiplications. The matrix $D = v^{T}v$ is computable in time
$O(n)$ and the matrix $C$ in $O(n^2)$.
With similar reasoning, the product $A^{-1}UC^{-1}V
A^{-1} = (A^{-1} U C^{-1})(C^{-1})^{-1}(-C^{-1} V A^{-1})$ can be computed in
$O(n^2)$, given that the multiplications are realized as suggested by the
placement of brackets. Hence, overall the time to do a rank-up update is
$O(n^2)$. We can apply the very same techniques to verify that a rank-down
update by $1$ column  is implementable in $O(n^2)$ time.
\end{proof}

The lemma that follows implies that the efficient updates
presented above make it possible to carry all the inverse updates an order of magnitude faster than in
the original algorithm.

\begin{myle}
The amortized cost of rank-up and rank-down in arbitrary sequence of operations
is $O(n^2)$. And hence, the total running time of a sequence of length $t$ of
rank-up and rank-down updates takes time $O(tn^2)$.
\end{myle}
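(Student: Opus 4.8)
\emph{Proof plan.} If every rank-down removed exactly one column the claim would be immediate from the previous lemma, since then every operation has worst-case cost $O(n^2)$. The content is therefore entirely in the case of \emph{bulk} rank-downs, and the plan is to handle it by a standard amortized (potential-function) argument. First I would recall the reduction noted just above: a rank-down deleting $k \geq 1$ columns is executed as $k$ successive single-column rank-downs, and by the previous lemma each single-column rank-up and each single-column rank-down costs at most $\alpha n^2$ for some absolute constant $\alpha$. Hence the actual cost $c_j$ of the $j$-th operation is at most $\alpha n^2$ if it is a rank-up and at most $\alpha k_j n^2$ if it is a rank-down removing $k_j$ columns.

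Next I would track $m_j$, the number of columns of the working matrix $M$ after the $j$-th operation (equivalently $|S|-1$), noting the invariants $0 \leq m_j \leq n-1$, that a rank-up sets $m_j = m_{j-1}+1$, and that a rank-down sets $m_j = m_{j-1}-k_j$ with $k_j \geq 1$. Define the potential $\Phi_j = \alpha\, m_j\, n^2 \geq 0$ and the amortized cost $\hat c_j = c_j + \Phi_j - \Phi_{j-1}$. For a rank-up, $\hat c_j \leq \alpha n^2 + \alpha n^2 = 2\alpha n^2 = O(n^2)$; for a rank-down removing $k_j$ columns, $\hat c_j \leq \alpha k_j n^2 - \alpha k_j n^2 = 0 \leq O(n^2)$. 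This establishes the first assertion: every operation, however many columns it touches, has amortized cost $O(n^2)$.

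Finally I would sum over a sequence of $t$ operations and telescope:
\[
\sum_{j=1}^{t} c_j \;=\; \sum_{j=1}^{t}\hat c_j \;-\;(\Phi_t-\Phi_0)\;\leq\;\sum_{j=1}^{t}\hat c_j + \Phi_0 .
\]
Since the minimum norm point algorithm initializes $S$ to a single point, $M$ starts with no columns, so $m_0 = 0$ and $\Phi_0 = 0$; therefore $\sum_j c_j \leq \sum_j \hat c_j = O(tn^2)$, which is the second assertion. (Starting from an arbitrary simplex one has only $\Phi_0 \leq \alpha(n-1)n^2 = O(n^3)$, which is absorbed whenever $t=\Omega(n)$, the regime of interest.)

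The only real subtlety I anticipate is exactly the bulk rank-down: the naive claim ``each of the $t$ operations costs $O(n^2)$'' is false, so the potential must be chosen so that the cost of discarding a block of columns is prepaid by the rank-ups that inserted them — this is what the term $\alpha m_j n^2$ does — and the one thing to verify carefully is that the constant in the potential is taken to be the same $\alpha$ furnished by the single-column lemma, so that the amortized cost of a bulk rank-down is genuinely nonpositive.
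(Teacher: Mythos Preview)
Your argument is correct and is essentially the paper's own: the paper uses the accounting method (each rank-up deposits an extra $O(n^2)$ of credit that later pays for that column's deletion), while you express the identical idea via the potential $\Phi_j=\alpha\,m_j\,n^2$. The two formalisms are equivalent here, and your version is in fact more carefully quantified than the paper's sketch.
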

\begin{proof}
Recall that the running time of the algorithm is dominated by the total number
of times the steps $2$ and $3$ are called, multiplied by the time a respective
step takes. In a \emph{major} cycle we add a column to the matrix $M$ and during
a \emph{minor} cycle a number of columns are removed. In the step $2$ we
would need to solve an optimization problem, which will be discussed later.
In the $3$-rd step the algorithm needs to solve a system of a kind $y = Ax$ for
a given $y$. When an update of $A$ is readily available, this can be done in
$O(n^2)$ time, however without it we would need $O(n^3)$ time.

Let the number of removed columns during the $i$-th \emph{minor} cycle be $k_i$
and the number of times the \emph{major} cycle is called $m_i$, then the total
running time of the algorithm is $O(\sum_i m_in^2+\sum_i k_i n^2)$, as a rank
update of any kind of a single column takes $O(n^2)$ time.

Although the number of operations during a \emph{minor} cycle can be as bad as
$O(n^3)$ when $k_i = O(n)$, the amortized time for every rank-up and rank-down update
(independent of the number of columns that are deleted) can be shown to be $O(n^2)$.
This can be shown using the accounting method of amortized analysis. Let every rank-up
update bring a $O(n^2)$ to the system and another $O(n^2)$ to pay for its own
update. Then when a rank-down happens, every column has funds to pay for its
rank-down update. Hence, the amortized time for every update is $O(n^2)$ and any
sequence of such updates is computable in time $O(tn^2)$ where $t$ is the length
of the sequence.

Note that, the original algorithm also needs to solve a
system of a kind $y = Ax$ for a given $y$, however without the update of $A$
readily available, which needs $O(n^3)$ time.
And hence when $t$ such steps need to be performed, the total time is $O(tn^3)$, while the
improved version requires only $O(tn^2)$ time.
\end{proof}

\section*{Acknowledgements}
I would like to thank Prof. Dr. Matthias Hein for proposing this question and
for fruitful discussions.

\end{document}